\pgfplotsset{compat=1.17}
\newtheorem{theorem}{Theorem} 
\begin{document}
\title[Secure, Verifiable, and Scalable Multi-Client Data Sharing via Consensus-Based Privacy-Preserving Data Distribution]{Secure, Verifiable, and Scalable Multi-Client Data Sharing via Consensus-Based Privacy-Preserving Data Distribution}


\author*[1]{\fnm{Prajwal} \sur{Panth}}\email{prajwal.panth21@gmail.com}

\author[2]{\fnm{Sahaj Raj} \sur{Malla}}\email{sm03200822@student.ku.edu.np}

\affil*[1]{\orgdiv{School of Computer Engineering}, \orgname{KIIT University}, \orgaddress{\city{Bhubaneswar}, \country{India}}}
\affil[2]{\orgdiv{Department of Mathematics}, \orgname{Kathmandu University}, \orgaddress{\city{Dhulikhel}, \country{Nepal}}}

\abstract{
We propose the Consensus-Based Privacy-Preserving Data Distribution (CPPDD) framework, a lightweight and post-setup autonomous protocol for secure multi-client data aggregation. The framework enforces unanimous-release confidentiality through a dual-layer protection mechanism that combines per-client affine masking with priority-driven sequential consensus locking. Decentralized integrity is verified via step $(\sigma_{S})$ and data $(\sigma_{D})$ checksums, facilitating autonomous malicious deviation detection and atomic abort without requiring persistent coordination. The design supports scalar, vector, and matrix payloads with $O(N \cdot D)$ computation and communication complexity, optional edge-server offloading, and resistance to collusion under $N-1$ corruptions. Formal analysis proves correctness, Consensus-Dependent Integrity and Fairness (CDIF) with overwhelming-probability abort on deviation, and IND-CPA security assuming a pseudorandom function family. Empirical evaluations on MNIST-derived vectors demonstrate linear scalability up to $N = 500$ with sub-millisecond per-client computation times. The framework achieves 100\% malicious deviation detection, exact data recovery, and three-to-four orders of magnitude lower FLOPs compared to MPC and HE baselines. CPPDD enables atomic collaboration in secure voting, consortium federated learning, blockchain escrows, and geo-information capacity building, addressing critical gaps in scalability, trust minimization, and verifiable multi-party computation for regulated and resource-constrained environments.
}

\keywords{Privacy-preserving data aggregation, Malicious security with abort, Verifiable multi-party computation, Unanimous-release confidentiality, Exact data recovery, Bandwidth efficiency, Consortium federated learning.}
\maketitle

\section{Introduction}
\label{sec:introduction}
The rapid proliferation of multi-client collaborative systems in domains such as secure electronic voting, federated learning, and distributed data aggregation has intensified the need for robust privacy-preserving mechanisms. These systems require that individual client data remain confidential until all participants reach unanimous consensus on release, while simultaneously guaranteeing data integrity, computational scalability, and resilience against faulty or malicious behavior. Traditional approaches, including Secure Multi-Party Computation (SMPC)~\cite{yao1982protocols, goldreich1987play, ben1988completeness} and Homomorphic Encryption (HE)~\cite{gentry2009fully, fan2012somewhat}, offer strong cryptographic assurances but incur prohibitive computational and communication overheads, often scaling quadratically with the number of participants~\cite{keller2020mp, damgaard2012multiparty}. Differential Privacy (DP)~\cite{dwork2006calibrating, dwork2014algorithmic}, while effective in statistical settings, introduces controlled noise that compromises exactness, rendering it unsuitable for applications demanding precise outcomes~\cite{abadi2016deep, mcmahan2017learning}. Moreover, many existing frameworks rely on a trusted coordinator throughout the decryption phase, introducing a single point of failure and limiting applicability in fully decentralized environments~\cite{bonawitz2017aggregation}.

This paper presents the \textbf{Consensus-Based Privacy-Preserving Data Distribution (CPPDD)} framework, a lightweight, modular protocol engineered for secure, verifiable, and scalable data sharing among untrusting clients. CPPDD adheres to the ``compute now, decrypt later'' paradigm, employing a dual-layer protection scheme: (i) \emph{owner-specific obfuscation} via affine transformations to mask individual contributions, and (ii) \emph{priority-driven consensus encryption} using invertible element-wise operations and client-specific keys. The resulting structures—Obfuscation-Locked Vectors ($L_O$) and Consensus-Locked Vectors ($L_C$)—ensure that data remains inaccessible without collective participation. A distinguishing feature is the integration of \emph{step checksums} ($\sigma_S$) and \emph{data checksums} ($\sigma_D$), enabling fully decentralized integrity verification. Once the coordinator distributes $L_C$ and decryption lists, clients independently validate intermediate states and detect deviations in real time, eliminating coordinator dependency post-setup. The framework natively supports scalar, vector, and matrix data types with element-wise operations and offers optional edge-server offloading for resource-constrained devices.

\begin{figure}[!t]
\centering
\includegraphics[width=\linewidth]{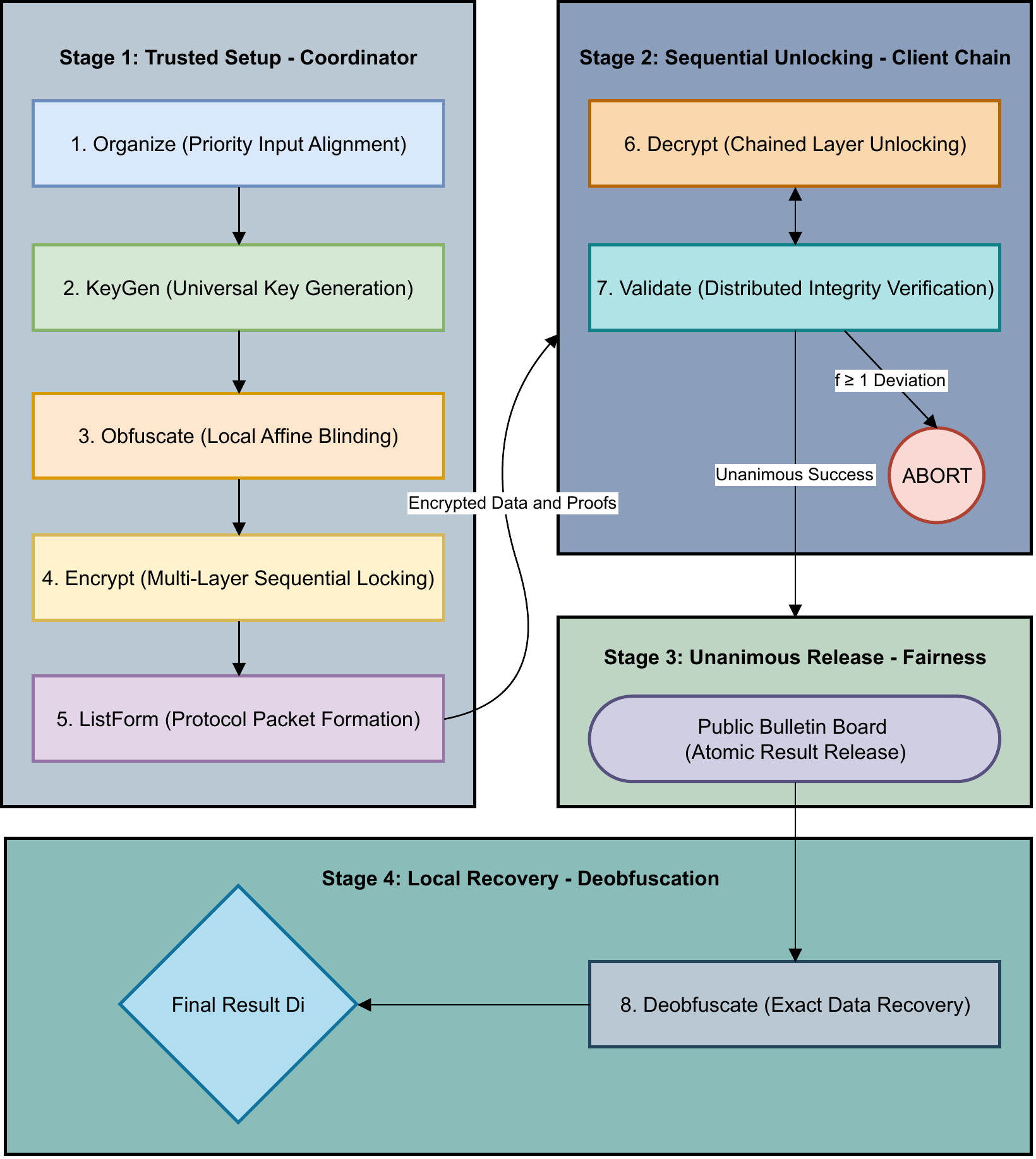} 
\caption{\textbf{CPPDD System Architecture and Execution Flow.} A vertical mapping of the protocol's eight cryptographic phases into four functional stages. The architecture transitions from a \textit{Trusted Setup} (Coordinator) to a decentralized \textit{Sequential Unlocking} chain (Clients), utilizing a \textit{Public Bulletin Board} to enforce fairness. The design guarantees an all-or-nothing integrity model, triggering an atomic abort upon detecting any $f \ge 1$ malicious deviation.}
\label{fig:system_architecture}
\end{figure}

\subsection{Contributions}
The principal contributions of this work are as follows:
\begin{enumerate}
    \item \textbf{Dual-Layer Locking Mechanism}: Combines per-client affine obfuscation with sequential blinding to resist unauthorized access, preserving privacy even under $N-1$ corruptions.
    \item \textbf{Post-Setup Autonomy}: Introduces decentralized step and data checksums for autonomous integrity checking, enabling clients to validate the chain independently without relying on a persistent coordinator during the execution phase.
    \item \textbf{Unified Structured Representations}: Defines the Obfuscation-Locked ($L_O$) and Consensus-Locked ($L_C$) vectors to accommodate scalar, vector, and matrix payloads with uniform element-wise processing.
    \item \textbf{Consensus-Dependent Integrity and Fairness (CDIF)}: Guarantees protocol abort upon any deviation, achieving all-or-nothing integrity where any $f \ge 1$ malicious deviation triggers a global halt to prevent partial data leakage.
    \item \textbf{Linear Complexity and Bandwidth Efficiency}: Achieves a total asymptotic complexity of $O(N \cdot D)$—a quadratic improvement over $O(N^2)$ MPC protocols—while maintaining a constant $O(1)$ per-client bandwidth profile. This eliminates the central server bottleneck common in star-topology aggregation, making the framework suitable for bandwidth-constrained edge networks.
    \item \textbf{Formal Security Guarantees}: Provides proofs of correctness, CDIF integrity under malicious adversaries, and IND-CPA semantic security for masked data, assuming a secure PRF family and a non-colluding coordinator.
    \item \textbf{Empirical Validation}: Demonstrates linear complexity up to $N=500$ clients on MNIST-derived vectors (\(D = 784\)), 100\% malicious deviation detection, and visually faithful data recovery.
\end{enumerate}

Section~\ref{sec:related} surveys related work on secure computation, privacy-preserving aggregation, and fault-tolerant consensus. Section~\ref{sec:model} formalizes the system and threat models. Section~\ref{sec:design} elaborates the CPPDD protocol phases, including pseudocode and edge-server extensions. Section~\ref{sec:theory} presents theoretical analysis with formal proofs. Section~\ref{sec:evaluation} reports experimental results on scalability, integrity enforcement, and resource usage. Finally, Section~\ref{sec:applications} explores applications, limitations, and future research directions, followed by the conclusion.


\section{Related Work}
\label{sec:related}

\subsection{Secure Multi-Party Computation}
Secure Multi-Party Computation (MPC) enables multiple parties to jointly compute a function over their private inputs without revealing them~\cite{yao1982protocols,ben1988completeness}. Goldreich \textit{et al.}~\cite{goldreich1987play} formalized the multi-party paradigm with information-theoretic and cryptographic security. Modern frameworks like MP-SPDZ~\cite{keller2020mp} incorporate optimizations such as free-XOR gates and preprocessing to reduce online latency. However, generic MPC protocols typically exhibit quadratic communication complexity~\cite{damgaard2012multiparty} $O(N^2)$, limiting scalability for large $N$. While robust against semi-honest adversaries, malicious-model extensions require expensive zero-knowledge proofs or cut-and-choose techniques~\cite{wei2016fastcut}, increasing overhead by orders of magnitude~\cite{blanchard2017machine}.

\subsection{Homomorphic Encryption and Differential Privacy}
Homomorphic Encryption (HE) permits computations on ciphertexts, yielding encrypted results~\cite{gentry2009fully}. The CKKS scheme~\cite{cheon2017homomorphic} supports approximate arithmetic suitable for machine learning, while Paillier~\cite{paillier1999public} enables additive operations in secure voting. Despite bootstrapping advancements~\cite{han2019better}, HE incurs high computational costs—often $10^3$--$10^6\times$ slower than plaintext and generates large ciphertexts, rendering it impractical for interactive multi-client settings.

Differential Privacy (DP) injects calibrated noise to protect individual contributions~\cite{dwork2006calibrating,dwork2014algorithmic}. Local DP perturbs data client-side~\cite{kasiviswanathan2011local}, whereas centralized variants rely on trusted aggregators~\cite{erlingsson2014rappor}. In federated learning, DP-FedAvg~\cite{mcmahan2018learning} bounds privacy loss, but the accuracy-privacy trade-off necessitates significant noise for strong $(\epsilon, \delta)$-guarantees, degrading utility in exact aggregation tasks~\cite{wei2020federated}.

\subsection{Privacy-Preserving Consensus and Aggregation}
Consensus mechanisms ensure agreement amid failures. Practical Byzantine Fault Tolerance (PBFT)~\cite{castro1999practical} tolerates $f < N/3$ faults with $O(N^2)$ messages, though scalability remains challenging~\cite{sukhwani2017performance}. HotStuff~\cite{yin2019hotstuff} achieves linear communication via pipelining, and probabilistic BFT variants~\cite{miller2016honey} optimize expected efficiency.

Privacy-preserving consensus employs secret sharing for average computation, extending to medians~\cite{wenrui2025optimal}. Byzantine-robust aggregators like Krum~\cite{blanchard2017machine} filter outliers but assume honest-majority and specific topologies. These outperform DP in precision yet require structured communication graphs and lack unanimous-release semantics.

\subsection{Federated Learning}
Federated Learning (FL) trains models without raw data exchange~\cite{mcmahan2017communication}. FedAvg iteratively aggregates gradients~\cite{mcmahan2017communication}, supporting healthcare~\cite{brisimi2018federated}. Vulnerabilities include inference attacks~\cite{zhu2019deep} and poisoning~\cite{fang2019local}. Secure aggregation via pairwise masking~\cite{bonawitz2017aggregation} or MPC~\cite{ryffel2018generic} mitigates leakage, while blockchain enhances auditability~\cite{lu2020blockchain}. Robust methods like FedBiKD~\cite{zhang2023fedbikd} handle heterogeneity, but coordinator dependency and sublinear scaling persist.

\subsection{Research Gaps and CPPDD Positioning}
Existing solutions struggle to balance computational scalability, persistent coordination, and numerical precision. While MPC and HE incur prohibitive resource overheads, Differential Privacy (DP) sacrifices arithmetic exactness, and Federated Learning (FL) often remains coordinator-bound and vulnerable to inference. CPPDD distinguishes itself by employing dual-layer affine obfuscation and sequential consensus-locked blinding to achieve $O(N \cdot D)$ complexity. It achieves post-setup autonomy through a decentralized validation framework utilizing step checksums ($\sigma_S$) to detect intermediate deviations and final data checksums ($\sigma_D$) to verify aggregate integrity. This dual-verification architecture ensures CDIF integrity—where any malicious deviation by $f \ge 1$ participants triggers an immediate atomic abort—enabling verifiable, all-or-nothing data release without accuracy compromise.


\section{System Model and Definitions}
\label{sec:model}

We formalize the CPPDD framework within a distributed multi-client environment comprising one coordinator and $N$ clients, each contributing confidential data for unanimous-release aggregation. The system operates over authenticated point-to-point channels, ensuring message origin integrity via digital signatures or MACs. Computations occur in a finite field $\mathbb{F}_p$ with a prime $p$ sufficiently large to accommodate data magnitudes without overflow.

\subsection{Threat Model}
We adopt a malicious adversary $\mathcal{A}$ modeled as a probabilistic polynomial-time (PPT) algorithm capable of statically corrupting up to $N-1$ clients. Corrupted clients may deviate arbitrarily from the protocol, including injecting false data, withholding messages, or colluding to infer honest contributions. The coordinator is modeled as an honest-but-curious setup dealer who generates keys but strictly does not collude with any client. To mitigate centralization risks, the coordinator goes permanently offline post-distribution, achieving Post-Setup Autonomy where clients independently verify the chain's integrity. Authenticated channels prevent impersonation and message tampering. Security objectives include:

\begin{itemize}
    \item \textbf{Correctness}: Honest clients recover exact original data upon unanimous participation.
    \item \textbf{Consensus-Dependent Integrity and Fairness (CDIF)}: Decryption aborts with overwhelming probability upon any deviation, detecting malicious behavior without coordinator intervention.
    \item \textbf{IND-CPA Semantic Security}: No PPT adversary gains a non-negligible advantage in distinguishing honest client data, provided the coordinator does not collude with the $N-1$ corrupt clients.
\end{itemize}

\subsection{Entities and Assumptions}
\begin{itemize}
    \item \textbf{Coordinator (Setup Dealer)}: An honest-but-curious entity responsible for the \textit{Trusted Setup} phase (organization, key generation, and list formation). The coordinator is assumed to be non-colluding and goes permanently offline post-distribution, ensuring \textit{Post-Setup Autonomy}.
    
    \item \textbf{Clients} $\mathcal{C} = \{C_1, \dots, C_N\}$: A set of $N$ participants ordered by a priority sequence $P$. Each client $C_i$ holds private data $D_i \in \mathbb{F}_p^D$ (supporting scalars, vectors, or flattened matrices). Clients are modeled as potentially malicious ($f \ge 1$), capable of deviating from the protocol or colluding, but are bound by the cryptographic checks.
    
    \item \textbf{Public Bulletin Board ($\mathcal{B}$)}: An append-only, immutable broadcast channel (e.g., a blockchain smart contract or a public ledger). It serves as the synchronization point for the \textit{Unanimous Release} stage, ensuring that the final aggregate $L^{(N)}$ becomes available to all participants simultaneously, thereby enforcing fairness.
    
     \item \textbf{Edge Servers (Optional)}: Untrusted computational proxies that facilitate offloading for resource-constrained clients. We assume these servers are equipped with \textit{Trusted Execution Environments} (TEEs) (e.g., Intel SGX) to isolate client keys from the host operating system during computation.
\end{itemize}

\subsection{Terminology}
\begin{itemize}
    \item \textbf{CCI Matrix}: $N \times 2$ matrix pairing client UUIDs with confidential payloads.
    \item \textbf{$D$ Vector}: $1 \times N$ row vector of prioritized data elements $D = [D_1, \dots, D_N]$, where each $D_i \in \mathbb{F}_p^D$.
    \item \textbf{$I$ Mapping}: Bijection $I: P \to \{\text{UUID}_1, \dots, \text{UUID}_N\}$.
    \item \textbf{$P$ Priorities}: Permutation of $\{1, \dots, N\}$ defining sequential order.
    \item \textbf{$K_o$ Obfuscation Keys}: Per-client pairs $K_o = \{(\lambda_i, r_i)\}_{i=1}^N$, with $\lambda_i \in \mathbb{F}_p^\times$, $r_i \in \mathbb{F}_p^D$.
    \item \textbf{$K_c$ Consensus Keys}: $K_c = \{k_{c,1}, \dots, k_{c,N}\} \subset \mathbb{F}_p^D$.
    \item \textbf{$\Theta$ Operation Codes}: $\Theta = \{\theta_1, \dots, \theta_N\} \subset \{+,-,\times,\div\}^N$, mapped to field operations (division via modular inverse).
    \item \textbf{$L_O$ Obfuscated Vector}: $L_O = [O_1, \dots, O_N]$ where $O_i = \lambda_i \cdot D_i + r_i$ (element-wise).
    \item \textbf{$L_C$ Consensus-Locked Vector}: Final encrypted aggregate post sequential operations.
    \item \textbf{$\sigma_S$, $\sigma_D$ Checksums}: $\sigma_S = \{\sigma_{S,1}, \dots, \sigma_{S,N}\}$ intermediate step checksums (hashed scalar sums); $\sigma_D = \{\sigma_{D,1}, \dots, \sigma_{D,D}\}$ final per-dimension ratios for aggregate verification.
\end{itemize}

\subsection{Protocol Workflow}
CPPDD comprises an octuple of phases:
\begin{equation} \label{eq:phases}
\begin{aligned}
(&\mathrm{Organize},\ \mathrm{KeyGen},\ \mathrm{Obfuscate},\\
 &\mathrm{Encrypt},\ \mathrm{ListForm},\ \mathrm{Decrypt},\\
 &\mathrm{Validate},\ \mathrm{Deobfuscate}\,).
\end{aligned}
\end{equation}

The coordinator executes the first five phases offline. Clients perform the remaining three in priority order \(1 \to N\) via broadcast or relay, enabling decentralized progression.


\section{Framework Design}
\label{sec:design}
The CPPDD framework operates through a strictly ordered sequence, defined by the octuple in \eqref{eq:phases}, of cryptographic transformations designed to ensure that data remains inaccessible until every participant has contributed their unique decryption layer. By mapping these sub-actions into four distinct functional stages—Trusted Setup, Sequential Unlocking, Unanimous Release, and Local Recovery—the system eliminates the "Last-Mover" advantage and ensures that any malicious tampering during the chain triggers an immediate, verifiable abort. This architecture, illustrated in the process flow below, maintains high computational efficiency while enforcing a rigorous "all-or-nothing" integrity model suitable for trust-minimized consortiums.

\begin{figure}[!t]
    \centering
    \begin{subfigure}{\linewidth}
        \centering
        \includegraphics[width=1.0\linewidth, height=1.5cm, keepaspectratio]{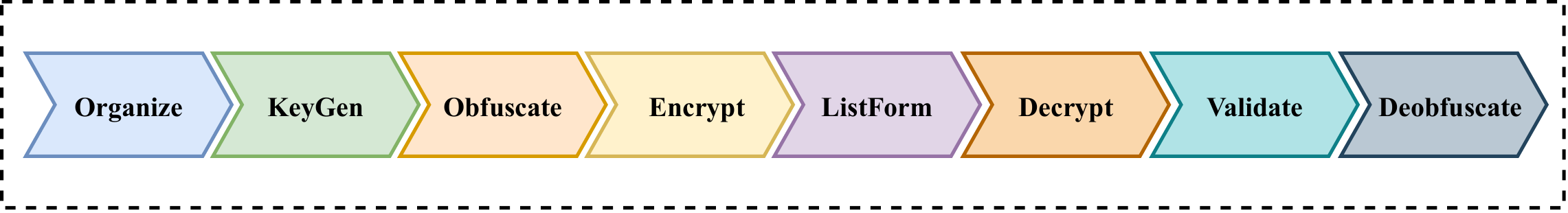}
        \caption{\textbf{The CPPDD Protocol Octuple:} Sequential atomic phases from input alignment to deobfuscation.}
        \label{fig:octuple_phases}
    \end{subfigure}
    
    \vspace{1.5em} 

    \begin{subfigure}{\linewidth}
        \centering
        \includegraphics[width=0.9\linewidth, height=1.5cm, keepaspectratio]{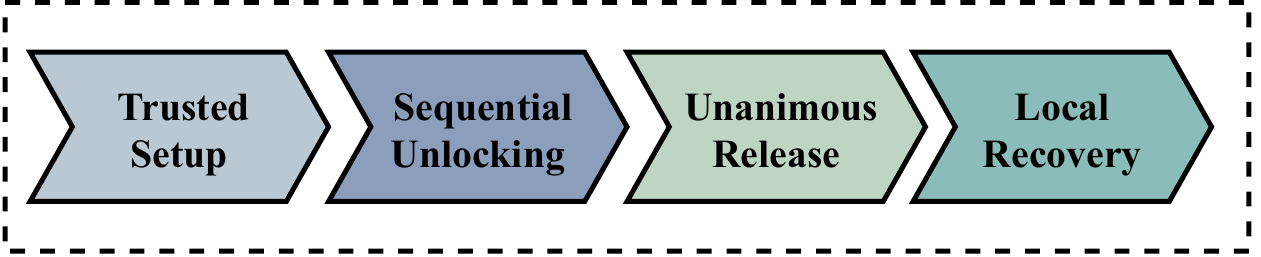}
        \caption{\textbf{High-level Functional Pipeline:} Mapping protocol phases into four distinct operational stages.}
        \label{fig:functional_stages}
    \end{subfigure}
    \label{fig:cppdd_workflow}
\end{figure}

\subsection{Coordinator Phases}

\subsubsection{Data Preparation (Organize)}
The coordinator receives the Client Contribution Index (CCI) matrix—an $N \times 2$ array pairing UUIDs with confidential payloads $D_i \in \mathbb{F}_p^D$ (scalars, vectors, or matrices reshaped element-wise). It sorts CCI by client-defined priorities $P = \{1, \dots, N\}$, yielding the ordered data vector $D = [D_1, \dots, D_N]$ and bijection $I: P \to \{\text{UUID}_1, \dots, \text{UUID}_N\}$.

\subsubsection{Key Generation (KeyGen)}
Sample obfuscation keys $K_o = \{(\lambda_i, r_i)\}_{i=1}^N$ with $\lambda_i \in \mathbb{F}_p^\times$ and $r_i \in \mathbb{F}_p^D$ from a PRF family. Generate consensus keys $K_c = \{k_{c,1}, \dots, k_{c,N}\} \subset \mathbb{F}_p^D$ uniformly at random. Assign operation codes $\Theta = \{\theta_1, \dots, \theta_N\} \subset \{+,-,\times,\div\}^N$, mapped to field operations (division via modular inverse $\div x \equiv x^{-1} \mod p$).

\subsubsection{Data Obfuscation (Obfuscate)}
Compute per-client obfuscated elements  
\[
O_i = \lambda_i \cdot D_i + r_i \quad (\text{element-wise}),
\]  
forming the obfuscation-locked vector $L_O = [O_1, \dots, O_N]$.

\subsubsection{Consensus Encryption (Encrypt)}
Initialize $L^{(0)} = \sum_{i=1}^N O_i$. For $i = N$ downto $1$, apply the \emph{complementary} operation  
\[
L^{(N-i+1)} = L^{(N-i)} \circ'_{\theta_i} k_{c,i},
\]  
where $\circ'_{\theta_i}$ is the inverse of $\circ_{\theta_i}$ (e.g., $\div' \equiv \times$). The final $L_C = L^{(N)}$ is the consensus-locked vector. Compute step checksums  
\[
\sigma_{S,i} = \sum_{d=1}^D L^{(i)}_d \quad \forall i \in [1,N],
\]  
using a CRHF $H$ for compactness: $\sigma_{S,i} = H(\sum L^{(i)})$.

\subsubsection{List Formation (ListForm)}
Construct a broadcast packet containing $L_C$, per-client decryption lists $\{(\theta_i, k_{c,i}, \lambda_i, r_i)\}_{i=1}^N$, and $\{\sigma_{S,i}\}_{i=1}^N$. Include priority mapping $I$ for address resolution.

\begin{algorithm}[!t]
\caption{CPPDD Consensus Encryption (Coordinator)}
\label{alg:encrypt}
\begin{algorithmic}[1]
\Require $L_O = [O_1, \dots, O_N]$, $K_c = [k_{c,1}, \dots, k_{c,N}]$, $\Theta = [\theta_1, \dots, \theta_N]$
\Ensure $L_C$, $\{\sigma_{S,i}\}_{i=1}^N$
\State $L \gets \mathbf{0} \in \mathbb{F}_p^D$
\For{$i = 1$ \textbf{to} $N$}
    \State $L \gets L + O_i$ \Comment{Initial element-wise aggregation}
\EndFor
\State $\sigma_S \gets \text{array of size } N$
\For{$i = N$ \textbf{downto} $1$} \Comment{Apply encryption layers in reverse priority}
    \State $\sigma_{S}[i] \gets H(\sum_{d=1}^D L_d)$ \Comment{Record state before $i$-th layer}
    \State $L \gets \text{apply\_inv\_op}(L, k_{c,i}, \theta_i)$ \Comment{Apply $\circ'_{\theta_i}$}
\EndFor
\State $L_C \gets L$
\State \Return $L_C, \sigma_S$
\end{algorithmic}
\end{algorithm}

\subsection{Client Phases}

Clients receive the broadcast packet and proceed in priority order. Each client $C_i$ (priority $i$) performs:

\subsubsection{Consensus Decryption (Decrypt)}
Start from received $L_C$. For $j = 1$ to $i$:  
\[
L^{(j)} = L^{(j-1)} \circ_{\theta_j} k_{c,j}.
\]  
Broadcast updated $L^{(i)}$ to subsequent clients.

\subsubsection{Integrity Validation (Validate)}
After each decryption step $j$, recompute $\sigma'_{S,j} = H(\sum L^{(j)})$ and verify against distributed $\sigma_{S,j}$. Allow up to $\tau=3$ retries for transient faults. On persistent mismatch: broadcast abort and isolate deviant predecessor. Upon reaching $i=N$, compute final data checksums
\[
\sigma_{D,d} = \frac{\sum_{i=1}^N O_{i,d}}{L^{(N)}_d} \quad \forall d \in [1,D]
\]
and verify $\sigma_{D,d} = 1$ (exact recovery indicator). Additionally, confirm that the summed ratios satisfy $\sum_{d=1}^D \sigma_{D,d} = D$.

\subsubsection{Data Deobfuscation (Deobfuscate)}
For honest clients:  
\[
D_i = (\ O_i - r_i\ ) \cdot \lambda_i^{-1} \quad (\text{element-wise}).
\]

\begin{algorithm}[!t]
\caption{CPPDD Client Decryption \& Validation}
\label{alg:client}
\begin{algorithmic}[1]
\Require $L_C$, $\{\theta_j, k_{c,j}, \lambda_i, r_i\}$, $\{\sigma_{S,j}\}$, priority $i$
\Ensure $D_i$ or abort
\State $L \gets L_C$
\For{$j = 1$ \textbf{to} $i-1$}
    \State $L \gets \text{apply\_op}(L, k_{c,j}, \theta_j)$ \Comment{Process layers from prior clients}
\EndFor
\State $L \gets \text{apply\_op}(L, k_{c,i}, \theta_i)$ \Comment{Remove current client's layer}
\State $\sigma' \gets H(\sum_{d=1}^D L_d)$
\If{$\sigma' \neq \sigma_{S,i}$ after $\tau$ retries}
    \State \textbf{abort} \Comment{Detection of malicious deviation}
\EndIf

\If{$i < N$}
    \State Broadcast $L$ to next client $C_{i+1}$ \Comment{Continue chain}
\ElsIf{$i = N$}
    \State Post $L$ to public bulletin board / ledger \Comment{Universal Release}
    \State Verify $\sigma_{D,d} = 1$ $\forall d$ (and optionally $\sum \sigma_{D,d} = D$)
\EndIf

\State $D_i \gets (O_i - r_i) \cdot \lambda_i^{-1}$ \Comment{Local deobfuscation}
\State \Return $D_i$
\end{algorithmic}
\end{algorithm}

\subsection{Secure Edge Computation Offloading}
To support resource-constrained IoT devices, the framework supports delegated computation where clients offload the intensive field arithmetic of Algorithm~\ref{alg:client} to Edge Servers. Since this operation requires the client's private consensus key $k_{c,i}$, offloading must occur within a \textbf{Trusted Execution Environment (TEE)} (e.g., Intel SGX or ARM TrustZone) to prevent key leakage or impersonation. In this model, the client transmits the encrypted state $L$ and key $k_{c,i}$ into the secure enclave. The enclave performs the decryption and returns the updated state $L'$. Crucially, the client retains the role of the ultimate verifier: upon receiving $L'$, the client locally computes the step-checksum $H(L')$ and compares it against the expected $\sigma_{S,i}$ before signing and broadcasting. This ensures that even if the edge host is compromised, it cannot inject invalid data into the chain without detection.

\section{Theoretical Analysis}
\label{sec:theory}

This section establishes formal security and correctness guarantees for the CPPDD framework under the system and threat models defined in Section~\ref{sec:model}. All operations are performed element-wise in the finite field $\mathbb{F}_p$, where the prime $p$ satisfies $p > \max_i |D_i| + N \cdot \max_i |r_i|$ to prevent overflow during summation and affine transformations.

\subsection{Preliminaries}
\begin{itemize}
    \item \textbf{Pseudorandom Function (PRF) Family}: Let $\mathcal{F} = \{F_s : \mathbb{F}_p^D \to \mathbb{F}_p^D\}_{s \in \mathcal{K}}$ be a secure PRF family such that for randomly sampled $s = (\lambda, r)$ with $\lambda \in \mathbb{F}_p^\times$ and $r \in \mathbb{F}_p^D$, the function $F_s(x) = \lambda \cdot x + r$ is computationally indistinguishable from a uniform random function over $\mathbb{F}_p^D$.
    
    \item \textbf{Collision-Resistant Hash Function (CRHF)}: Let $H: \{0,1\}^* \to \{0,1\}^\kappa$ be a collision-resistant hash function with security parameter $\kappa$ (e.g., $\kappa = 256$ for SHA-3). We define the \emph{step checksum} as:
    \[
    \sigma_{S,j} = H\left( \sum_{d=1}^D L^{(j)}_d \right) \in \{0,1\}^\kappa,
    \]
    where $L^{(j)} \in \mathbb{F}_p^D$ is the aggregate state after step $j$, and the sum is over all dimensions (a scalar in $\mathbb{F}_p$).
    
    \item \textbf{Authenticated Channels}: All inter-client and coordinator-client communications are authenticated via digital signatures or MACs, preventing forgery and ensuring origin integrity.
    
    \item \textbf{Retry Bound}: The protocol allows up to $\tau$ retry attempts per decryption step to tolerate transient network errors. Persistent mismatch after $\tau+1$ attempts triggers abort.
\end{itemize}

\subsection{Correctness}

We first prove that honest execution recovers the exact aggregate and validates all integrity checksums.

\begin{theorem}[Correctness]
\label{thm:correctness}
If all $N$ clients and the coordinator are honest and follow the protocol, then:
\begin{enumerate}
    \item After full decryption, the aggregate state satisfies $L^{(N)} = \sum_{i=1}^N O_i$.
    \item Each client $C_i$ recovers its private data exactly: $D_i = (O_i - r_i) \cdot \lambda_i^{-1}$.
    \item All step checksums match: $\sigma'_{S,j} = \sigma_{S,j}$ for $j \in [1,N]$.
    \item Final data checksums satisfy $\sigma_{D,d} = 1$ for all dimensions $d \in [1,D]$, and $\sum_{d=1}^D \sigma_{D,d} = D$.
\end{enumerate}
\end{theorem}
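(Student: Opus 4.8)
The plan is to verify the four items in order; item~1 carries the real content and items~2--4 follow quickly from it. First I would disentangle the two families of states the paper writes with the single symbol $L^{(\cdot)}$. On the coordinator side, set $E_0 = \sum_{i=1}^N O_i$ and $E_m = E_{m-1} \circ'_{\theta_{N-m+1}} k_{c,N-m+1}$ for $m = 1,\dots,N$, so that $E_N = L_C$ and, reading off Algorithm~\ref{alg:encrypt}, $\sigma_{S,i} = H\!\bigl(\sum_{d}(E_{N-i})_d\bigr)$ (the coordinator hashes the state present just before it applies layer $i$ in its \textbf{downto} loop). On the client side, set $L^{(0)} = L_C$ and $L^{(j)} = L^{(j-1)} \circ_{\theta_j} k_{c,j}$. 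The crux is the bridging identity $L^{(j)} = E_{N-j}$ for all $j \in [0,N]$.

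I would prove this by induction on $j$. The base case $j = 0$ is $L^{(0)} = L_C = E_N$. For the step, assume $L^{(j-1)} = E_{N-j+1}$; the encryption recurrence at index $m = N-j+1$ reads $E_{N-j+1} = E_{N-j} \circ'_{\theta_j} k_{c,j}$, so $L^{(j)} = L^{(j-1)} \circ_{\theta_j} k_{c,j} = \bigl(E_{N-j} \circ'_{\theta_j} k_{c,j}\bigr) \circ_{\theta_j} k_{c,j} = E_{N-j}$, the last step using that $\circ_\theta$ is a two-sided inverse of $\circ'_\theta$ under a common key, $(x \circ'_\theta k)\circ_\theta k = x$. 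That inverse property I would verify by the short case check $(x-k)+k = x$, $(x+k)-k = x$, $(x\cdot k^{-1})\cdot k = x$, $(x\cdot k)\cdot k^{-1} = x$ over $\{+,-,\times,\div\}$, noting the two multiplicative cases need $k$ to be a unit coordinate-wise---hence division-type operations must be paired with consensus keys in $(\mathbb{F}_p^\times)^D$ (uniform sampling suffices outside an event of probability $\le ND/p$). Setting $j = N$ yields item~1: $L^{(N)} = E_0 = \sum_{i=1}^N O_i$.

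Items~2--4 are then corollaries. Item~2 is the algebraic identity $(O_i - r_i)\lambda_i^{-1} = (\lambda_i D_i + r_i - r_i)\lambda_i^{-1} = D_i$, valid because $\lambda_i \in \mathbb{F}_p^\times$; no other hypothesis enters. Item~3 needs no cryptographic assumption: by the bridging identity the scalar client $C_j$ rehashes after its $j$ decryption steps, $\sum_d (L^{(j)})_d$, is literally the scalar the coordinator hashed, $\sum_d (E_{N-j})_d$, and $H$ is a deterministic function, so $\sigma'_{S,j} = \sigma_{S,j}$ for every $j \in [1,N]$ and no honest client aborts (collision resistance is used only in the separate CDIF analysis, not here). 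For item~4, item~1 gives $L^{(N)}_d = \sum_{i=1}^N O_{i,d}$ for each $d$, whence $\sigma_{D,d} = \bigl(\sum_i O_{i,d}\bigr)/L^{(N)}_d = 1$ provided that denominator is nonzero; I would discharge that either as the well-formedness condition the coordinator checks in ListForm or as a genericity event over the honestly sampled $r_i$ failing with probability $\le 1/p$ per coordinate, and then $\sum_{d=1}^D \sigma_{D,d} = D$ follows.

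The step I expect to be the main obstacle is not mathematical but the index bookkeeping: the coordinator builds layers in a \textbf{downto} loop while clients strip them in a \textbf{for} loop, and because $L^{(\cdot)}$ is overloaded for both directions, the care is in fixing the correspondence $L^{(j)} \leftrightarrow E_{N-j}$ and confirming that the checksum the coordinator writes at its ``layer $i$'' is exactly the one client $C_i$ recomputes after $i$ decryption steps---an off-by-one there would silently break item~3. The only genuine (and minor) mathematical caveat is well-definedness of division in $\mathbb{F}_p$ and of the $\sigma_{D,d}$ ratios, handled uniformly by the unit/genericity remark above.
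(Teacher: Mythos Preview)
Your proposal is correct and follows essentially the same cancellation/telescoping argument as the paper's own proof, but you make the indexing explicit via the bridging identity $L^{(j)} = E_{N-j}$ and verify the inverse property case by case, whereas the paper simply asserts that ``each pair $\circ_{\theta_j}\circ\circ'_{\theta_j}$ cancels.'' Your additional attention to well-definedness (units for multiplicative keys, nonzero denominators for $\sigma_{D,d}$) fills gaps the paper's proof leaves implicit.
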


\begin{proof}
The coordinator initializes $L^{(0)} = \sum_{i=1}^N O_i = \sum_{i=1}^N (\lambda_i \cdot D_i + r_i)$. During encryption (priority $N \to 1$), each step applies the inverse operation $\circ'_{\theta_i}$ with key $k_{c,i}$, forming a telescoping chain of invertible affine transforms. Decryption (priority $1 \to N$) applies the forward operations $\circ_{\theta_j}$ in exact reverse order. By field invertibility and associativity of element-wise operations, each pair $\circ_{\theta_j} \circ \circ'_{\theta_j}$ cancels, yielding:
\[
L^{(N)} = L^{(0)} = \sum_{i=1}^N O_i.
\]

Client $C_i$ receives its obfuscated vector $O_i$ (via $L_O$ or decryption lists) and computes:
\[
D_i = (O_i - r_i) \cdot \lambda_i^{-1} = (\lambda_i \cdot D_i + r_i - r_i) \cdot \lambda_i^{-1} = D_i,
\]
since $\lambda_i \in \mathbb{F}_p^\times$ is invertible.

Step checksums are computed during encryption as $\sigma_{S,j} = H(\sum_d L^{(j)}_d)$ and verified identically during decryption by honest clients, ensuring $\sigma'_{S,j} = \sigma_{S,j}$.

Finally, define the final data checksum per dimension:
\[
\sigma_{D,d} = \frac{\sum_{i=1}^N O_{i,d}}{L^{(N)}_d}.
\]
Since $L^{(N)}_d = \sum_{i=1}^N O_{i,d}$, it follows that:
\[
\sigma_{D,d} = \frac{\sum_{i=1}^N O_{i,d}}{\sum_{i=1}^N O_{i,d}} = 1 \quad \forall d \in [1,D].
\]
Summing over all dimensions gives:
\[
\sum_{d=1}^D \sigma_{D,d} = D,
\]
confirming exact aggregate recovery.
\end{proof}

\subsection{Consensus-Dependent Integrity and Fairness (CDIF)}

CPPDD achieves all-or-nothing decryption: success requires unanimous adherence; any deviation is detected with overwhelming probability.

\begin{theorem}[CDIF Integrity and Fairness]
\label{thm:cdif}
Under a malicious PPT adversary $\mathcal{A}$ corrupting $f \ge 1$ clients, protocol execution succeeds if and only if all clients follow the protocol faithfully. Any deviation by a corrupted client is detected by the subsequent honest client via $\sigma_S$ with probability $1 - 2^{-\kappa}$, triggering an immediate atomic abort and preventing the release of the final aggregate.
\end{theorem}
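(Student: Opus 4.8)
The plan is to prove the biconditional by peeling off the easy direction, attacking the contrapositive of the hard one via a ``first deviator'' argument, and finally accounting for the abort probability. The ``if'' direction is immediate from Theorem~\ref{thm:correctness}: with every party faithful, each recomputed step checksum equals its distributed value and every $\sigma_{D,d}$ equals $1$, so no abort branch fires and $L^{(N)}$ reaches $\mathcal{B}$. For ``only if'' I would prove the contrapositive in the quantitative form \emph{any protocol-altering deviation is caught, with probability at least $1-2^{-\kappa}$, before $L^{(N)}$ is accepted}. Because corruptions are static and the priority order $1\to N$ is total, the first deviating client $C_m$ is well defined and, by minimality, receives from its predecessor the honest state $L^{(m-1)}$; authenticated channels further guarantee that $\mathcal{A}$ cannot forge honest messages or the coordinator's signed setup, so the references $\{\sigma_{S,j}\}$, the per-dimension aggregates $\sum_i O_{i,d}$ feeding $\sigma_D$, and the honest prefix are all reliable. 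Degenerate deviations---$C_m$ withholds its broadcast or posts a spurious abort---merely stall the chain, so $L^{(N)}$ is never released and the claim holds trivially; the substantive case is that $C_m$, possibly aided by corrupt successors, forwards a vector differing from the honest continuation in at least one coordinate.

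The detection core is the recompute-and-compare performed by the first honest client $C_h$ with $h>m$ (or, if the whole tail $C_m,\dots,C_N$ is corrupt, by any honest client verifying $\mathcal{B}$ afterward). Write $\delta=\tilde L^{(h-1)}-L^{(h-1)}\ne\mathbf 0$ for the discrepancy entering $C_h$. It checks $H\!\big(\sum_d \tilde L^{(h)}_d\big)$ against the authenticated $\sigma_{S,h}=H\!\big(\sum_d L^{(h)}_d\big)$; a false accept requires either finding a second preimage of $H$---which a PPT adversary does with probability at most $2^{-\kappa}$ by collision resistance---or the exact field identity $\sum_d \tilde L^{(h)}_d=\sum_d L^{(h)}_d$. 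If $\theta_h\in\{+,-\}$ this identity reduces to $\sum_d\delta_d=0$, which fails unless $\delta$ is \emph{sum-preserving}; if $\theta_h\in\{\times,\div\}$ and $\mathcal{A}$ does not know $C_h$'s key, the difference is the linear form $\sum_d\delta_d\,k_{c,h,d}^{\pm1}$, nonzero in the uniform secret and hence vanishing only with probability $1/p$. So $C_h$ catches every deviation that perturbs the running coordinate sum. The leftover case is a sum-preserving $\delta$ injected at an additive layer with all later layers additive; here I invoke the final checksum. Since every consensus operation is invertible, $k_{c,i,d}\in\mathbb{F}_p^\times$ whenever $\theta_i\in\{\times,\div\}$, so each layer acts coordinate-wise as a bijection of $\mathbb{F}_p$ and the support of a discrepancy vector is invariant down the chain; thus $\delta\ne\mathbf 0$ reaches $\tilde L^{(N)}$ with $\tilde L^{(N)}_d\ne\sum_i O_{i,d}$ for some $d$, whence $\sigma_{D,d}=\big(\sum_i O_{i,d}\big)/\tilde L^{(N)}_d\ne 1$ (a zero denominator being itself a failed check) and the protocol aborts.

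Assembling the cases: if the protocol does not abort and posts $\hat L$, the honest-verified condition $\sigma_{D,d}=1$ for all $d$ forces $\hat L=L^{(N)}=\sum_i O_i$ exactly---so the released aggregate is precisely the honest one---and, every honest client's step check having passed, the coordinate sum at each honest checkpoint agrees with the honest value up to a $2^{-\kappa}$ hash event; hence ``success $\Rightarrow$ faithful execution'' modulo self-cancelling deviations that leave every checkpoint and the output exactly correct, which cause no integrity loss. For the probability, a union bound over the $\tau+1$ permitted attempts per step gives detection failure at most $(\tau+1)\,2^{-\kappa}$, with the $1/p$ terms dominated once $\log_2 p\gtrsim\kappa$, so detection holds with probability $\ge 1-2^{-\kappa}$. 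Finally, on detection the discovering client broadcasts an authenticated abort on the append-only $\mathcal{B}$, which every honest client honors by halting and refusing to accept any posted $L^{(N)}$; since the honest clients that already acted only relayed consensus-locked states, this yields the atomic, all-or-nothing halt, and since $\mathcal{B}$ delivers the aggregate to all participants simultaneously and no honest client ever exposes more than a locked state, the fairness clause---no participant obtains the usable aggregate unless everyone contributed faithfully---follows.

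The step I expect to be the obstacle is precisely the sum-preserving case, because $\sigma_{S,j}$ commits only to the scalar $\sum_d L^{(j)}_d$ and not to the vector $L^{(j)}$, so the theorem's literal phrasing ``deviation $\Rightarrow$ $\sigma_S$ mismatch'' is not true on its face and the argument must route through $\sigma_D$ together with the support-invariance observation above. A related delicacy is that static corruption of up to $N-1$ clients lets a corrupt late client (say $C_N$) hold every consensus key $k_{c,1},\dots,k_{c,N}$ and hence every honest intermediate state, so $\mathcal{A}$ can in general shape a discrepancy toward any target coordinate sum---meaning the ``$k_{c,h}$ randomizes the sum'' step is available only when no such successor is corrupt, and $\sigma_D$ is genuinely load-bearing rather than a convenience. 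I would therefore present $\sigma_D$ as the universal backstop and position $\sigma_S$ as (i) decisive---delivering the promised immediate detection by the next honest client---whenever the deviation moves the running sum, and (ii) otherwise the vehicle for early abort and culprit attribution, with $\sigma_D$ sealing the residual gap. Minor points still to pin down are the well-definedness of ``first deviator'' from static corruption and the total priority order, the unforgeability of all reference material from authenticated channels, and reading ``release'' as ``acceptance as valid by an honest party'' rather than the mere appearance of bits on $\mathcal{B}$.
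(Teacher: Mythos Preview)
Your proof is correct and in fact more careful than the paper's own argument. The paper selects the \emph{last} deviator $C_j$ (highest priority index), tacitly assumes $C_{j+1}$ is honest, and then asserts that $L^{(j+1)}\neq L^*_{j+1}$ forces $\sigma'_{S,j+1}\neq\sigma_{S,j+1}$ except with probability $2^{-\kappa}$ ``by collision resistance of $H$.'' That step is precisely the gap you flagged: because $\sigma_{S}$ hashes only the scalar coordinate sum $\sum_d L^{(j)}_d$, a sum-preserving perturbation at an additive layer passes the check without any hash collision, so collision resistance alone does not deliver the claimed bound. Your decomposition---first-deviator indexing, case split on whether the running sum is moved, the randomization argument for multiplicative layers, and the invocation of $\sigma_D$ as the coordinate-wise backstop via support invariance under element-wise bijections---closes that gap and makes the ``only if'' direction actually go through. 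The paper's proof is shorter and reads cleanly, but it buys that brevity by eliding the sum-preserving case and the possibility that no honest client sits immediately after the deviator; your version is longer but sound, and your final paragraph correctly diagnoses that the theorem's literal ``via $\sigma_S$'' phrasing overstates what the step checksum alone achieves. The only minor slack in your write-up is stating $1-2^{-\kappa}$ after a union bound that really yields $1-(\tau+1)2^{-\kappa}$, but the paper carries the same constant-factor looseness.
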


\begin{proof}
Let $C_j$ be the corrupted client with the highest priority index $j$ (i.e., acts latest among corrupted clients) that deviates from the protocol (e.g., broadcasts malformed $L^{(j)} \neq L^*_j$, where $L^*_j$ is the expected state under honest execution). The next honest client $C_{j+1}$ receives $L^{(j)}$ and applies $\circ_{\theta_{j+1}}$ with $k_{c,j+1}$, yielding:
\[
L^{(j+1)} = L^{(j)} \circ_{\theta_{j+1}} k_{c,j+1}.
\]
Since $L^{(j)} \neq L^*_j$ and operations are deterministic affine transforms, $L^{(j+1)} \neq L^*_{j+1}$ with certainty.

Client $C_{j+1}$ recomputes $\sigma'_{S,j+1} = H(\sum_d L^{(j+1)}_d)$. The coordinator precomputed $\sigma_{S,j+1} = H(\sum_d L^*_{j+1})_d)$. A match requires a collision in $H$:
\[
\Pr[\sigma'_{S,j+1} = \sigma_{S,j+1} \mid L^{(j+1)} \neq L^*_{j+1}] \leq 2^{-\kappa},
\]
by collision resistance of $H$. Up to $\tau$ retries bound transient network errors; persistent mismatch after $\tau+1$ attempts triggers abort broadcast.

Sequential dependency ensures that once abort is declared, all subsequent honest clients halt, preventing any honest client from deobfuscating or releasing data. Thus, either all honest clients succeed (full adherence) or the protocol aborts (deviation detected w.h.p.).
\end{proof}

\subsection{IND-CPA Semantic Security}

Even with $N-1$ corruptions, the honest client's data remains computationally indistinguishable from random noise, provided the setup dealer (Coordinator) remains non-colluding.

\begin{theorem}[IND-CPA Semantic Security of Masked Data]
\label{thm:indcpa}
Assuming a secure PRF family and a non-colluding coordinator, no PPT adversary $\mathcal{A}$ corrupting up to $N-1$ clients gains non-negligible advantage in the IND-CPA game for distinguishing the private data $D_k$ of the remaining honest client $C_k$.
\end{theorem}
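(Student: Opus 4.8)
The plan is a standard game-hopping reduction to the (single-query) security of the PRF family, with the one honest client $C_k$'s mask playing the role of a one-time pad. First I would fix notation for the IND-CPA experiment: $\mathcal{A}$ statically corrupts a set $S$ with $|S| \le N-1$ and $k \notin S$, submits two candidate payloads $D_k^{(0)}, D_k^{(1)} \in \mathbb{F}_p^D$, the challenger samples $b \in \{0,1\}$ and runs an honest execution with $D_k := D_k^{(b)}$, and $\mathcal{A}$ outputs $b'$. The crucial preliminary step is to pin down $\mathcal{A}$'s view exactly: it consists of (i) the obfuscation keys, consensus keys, and op-codes of all clients in $S$; (ii) the broadcast packet data it is entitled to, namely $L_C$, the per-client decryption lists $(\theta_i,k_{c,i},\lambda_i,r_i)$ for $i \in S$, the priority mapping $I$, and all step checksums $\{\sigma_{S,j}\}$; (iii) every intermediate aggregate state $L^{(j)}$ broadcast along the unlocking chain; and (iv) the final $L^{(N)} = \sum_{i=1}^N O_i$ posted to the bulletin board. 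By the non-colluding-coordinator hypothesis, the one secret never exposed is $C_k$'s obfuscation key $(\lambda_k, r_k)$, hence $O_k = \lambda_k \cdot D_k^{(b)} + r_k$ is the only quantity in the view that depends on $b$ in a way not already determined by $\mathcal{A}$'s own inputs.

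Next I would introduce two hybrids. $H_0$ is the real experiment. $H_1$ is identical except that $O_k$ is replaced by a fresh $\tilde{O}_k$ sampled uniformly from $\mathbb{F}_p^D$, and every downstream quantity is recomputed from $\tilde{O}_k$: $L^{(0)} = \tilde{O}_k + \sum_{i\ne k} O_i$, the reverse encryption chain producing $L_C$, all $\sigma_{S,j}$, and all broadcast states. I claim $\lvert \Pr[H_0 \Rightarrow 1] - \Pr[H_1 \Rightarrow 1] \rvert \le \mathsf{Adv}^{\mathrm{prf}}_{\mathcal{F}}$. The reduction: a PRF adversary $\mathcal{B}$, given oracle access to either $F_{s_k}$ or a truly random function, generates all keys for clients $i \ne k$ itself, queries its oracle \emph{once} on $D_k^{(b)}$, uses the answer as $O_k$, simulates the entire protocol and the bulletin board toward $\mathcal{A}$, and echoes $\mathcal{A}$'s guess. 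The $F_{s_k}$ oracle reproduces $H_0$ and the random oracle reproduces $H_1$, so any $H_0$/$H_1$ gap yields an equal PRF advantage. Here I would remark that only a single evaluation of $F_{s}$ is ever made, so the paper's single-sample formulation of the assumption is exactly what is needed; indeed with a truly uniform $r_k$ used once, $O_k$ is information-theoretically independent of $D_k$, i.e., a perfect one-time pad.

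Then I would analyze $H_1$ in isolation. In $H_1$, $\tilde{O}_k$ is uniform over $\mathbb{F}_p^D$ and independent of $b$; moreover every other element of $\mathcal{A}$'s view is a deterministic function of $\tilde{O}_k$ together with data that is independent of $b$ (the corrupted clients' payloads and keys, $\Theta$, $K_c$, and the public hash $H$) — in particular $L^{(N)} = \tilde{O}_k + \sum_{i\ne k}O_i$, each $L^{(j)}$, and each $\sigma_{S,j}$ are all computable from $\tilde{O}_k$ and $b$-independent quantities. Hence the joint distribution of $\mathcal{A}$'s view is identical for $b=0$ and $b=1$, so $\Pr[H_1 \Rightarrow b] = 1/2$. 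Chaining the two steps gives $\mathsf{Adv}^{\mathrm{ind\text{-}cpa}}_{\mathcal{A}} \le \mathsf{Adv}^{\mathrm{prf}}_{\mathcal{F}}$, which is negligible by assumption.

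I expect the main obstacle to be the careful accounting in the second paragraph's claim that nothing \emph{beyond} $O_k$ carries information about $b$ — in particular noting that the bulletin-board release of $L^{(N)}$ is equivalent to revealing $\sum_i O_i$, from which $\mathcal{A}$ can subtract its $N-1$ known summands to isolate $O_k$ exactly, so privacy must rest entirely on the single mask and not on any residual uncertainty in the aggregate. A secondary point worth stating explicitly is that the non-colluding-coordinator hypothesis is load-bearing: a colluding coordinator would hand over $(\lambda_k, r_k)$, after which $D_k = (O_k - r_k)\cdot\lambda_k^{-1}$ is trivially recovered. Finally I would note a minor consistency check — replacing the PRF output in $H_1$ breaks nothing visible to $\mathcal{A}$, since $C_k$'s local deobfuscation and recovered plaintext are never part of $\mathcal{A}$'s view, and any sampled $\tilde{O}_k$ is consistent with some admissible key pair for either candidate payload.
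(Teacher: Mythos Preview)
Your proposal is correct and follows essentially the same route as the paper's proof: both reduce to PRF security by arguing that $O_k$ is computationally indistinguishable from uniform and then observe that the remainder of the adversary's view is a deterministic function of $O_k$ together with $b$-independent data. Your explicit hybrid-hopping and single-query reduction are more carefully stated than the paper's direct distributional argument, and your remark that $\mathcal{A}$ can isolate $O_k$ exactly from the published $L^{(N)}$ makes precise a point the paper leaves implicit.
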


\begin{proof}
Consider the standard IND-CPA game adapted for this protocol:
\begin{enumerate}
    \item $\mathcal{A}$ selects two messages $D_k^0, D_k^1 \in \mathbb{F}_p^D$ of equal length.
    \item Challenger samples $b \overset{R}{\leftarrow} \{0,1\}$, computes the obfuscated vector $O_k^b = \lambda_k \cdot D_k^b + r_k$ using the fresh PRF key pair $(\lambda_k, r_k)$ generated by the honest coordinator.
    \item Challenger provides $\mathcal{A}$ with the full protocol view: the final aggregate $L_C$, all obfuscated vectors $O_i$ for corrupted clients $i \neq k$, decryption lists, and step checksums $\sigma_S$.
    \item $\mathcal{A}$ outputs $b'$; the advantage is defined as $|\Pr[b' = b] - 1/2|$.
\end{enumerate}

The adversary controls $N-1$ clients and observes their obfuscated vectors $O_i = \lambda_i \cdot D_i + r_i$ (since corrupted clients know their own inputs $D_i$). However, for the target honest client $C_k$, the keys $\lambda_k$ and $r_k$ remain secret (by the non-collusion assumption). Due to the security of the PRF family $\mathcal{F}$, the output $O_k^b$ is computationally indistinguishable from a uniform random vector $U \in \mathbb{F}_p^D$.

The consensus blinding process applies a sequence of invertible affine transforms (element-wise $\pm, \times, \div$ with keys $k_{c,i}$). Affine transforms are bijections that preserve the uniformity of the distribution. Starting from the initial aggregate:
\[
L^{(0)} = \sum_{i=1}^N O_i = \left(\sum_{i \neq k} O_i\right) + O_k^b,
\]
where the sum of corrupted vectors $\sum_{i \neq k} O_i$ is known to $\mathcal{A}$. Adding the pseudo-random vector $O_k^b \approx_c U$ acts as a One-Time Pad in the field $\mathbb{F}_p$, rendering the sum $L^{(0)}$ indistinguishable from uniform random noise ($L^{(0)} \approx_c U$).

Since every subsequent operation in the chain ($L^{(j)} = L^{(j-1)} \circ_{\theta_j} k_{c,j}$) is a bijective mapping, the final consensus-locked vector $L_C$ retains this uniform distribution relative to the adversary's view. Consequently, $L_C$ leaks no information about the bit $b$. The adversary's advantage is therefore bounded by the negligible advantage of distinguishing the PRF output from random.
\end{proof}

\subsection{Complexity Analysis}
\begin{itemize}
    \item \textbf{Computation}: Coordinator executes $O(N \cdot D)$ field operations during setup. Per-client cost is $O(D)$ per step, summing to a total of $O(N \cdot D)$ for the complete chain.
    \item \textbf{Communication}: The protocol entails $O(N \cdot D \cdot \log p)$ bits for the transmission of $L_C$ and decryption lists, distributed over $N$ sequential broadcast rounds.
    \item \textbf{Setup Overhead}: The Trusted Setup requires establishing $N$ secure authenticated unicast channels (e.g., TLS) for the distribution of private keys and obfuscation parameters. This is a strict \textit{one-time offline cost} that does not impact the online execution latency.
    \item \textbf{Storage}: Per-client storage is minimal at $O(D)$, required only for the current state vector $L$ and local keys.
\end{itemize}


\section{Performance Evaluation}
\label{sec:evaluation}

The CPPDD framework is evaluated using the reference Python implementation provided as supplementary material. Experiments utilize normalized MNIST images (pixel values scaled to [0,1]) as client payloads, with each client contributing one flattened image ($D=784$). All measurements reflect pure computational overhead on a standard CPU.

\begin{table*}[t]
\centering
\scriptsize
\caption{\textbf{Theoretical Mechanism and Complexity Comparison}}
\label{tab:theoretical_complexity}
\renewcommand{\arraystretch}{1.3}
\setlength{\tabcolsep}{4pt}
\begin{tabular}{p{0.18\linewidth} p{0.22\linewidth} p{0.25\linewidth} p{0.25\linewidth}}
\toprule
\textbf{Protocol (Year)} & \textbf{Core Mechanism} & \textbf{Computation Complexity} & \textbf{Communication Complexity} \\
\mbox{[Reference]} & \textit{(Algorithm Class)} & \textit{(Per Client / Total)} & \textit{(Per Client / Total)} \\
\midrule
CKKS \cite{cheon2017homomorphic} & HE; Poly. Arithmetic & $O(D \log D)$ / $O(N D \log D)$ & $O(D)$; Exp. factor $\gamma \!\in\! [40, 200]$ \\
(2017) & (Approx. Arithmetic) & High constant overhead & High fixed overhead \\
\midrule
Sec. Agg. \cite{bonawitz2017aggregation} & Pairwise Masking & $O(N)$ Setup; $O(N \cdot D)$ Agg & $O(N + D)$ (Per Client) \\
(2017) & (Dropout Robust) & (Fast modular add) & Scales with $N$ (Key exchange) \\
\midrule
Con. Priv. \cite{he2019consensus} & Iterative Noise & $O(T \cdot \text{deg} \cdot D)$ & $O(T \cdot \text{deg} \cdot D)$ \\
(2019) & (Gossip / Avg) & Depends on iterations $T$ & Depends on topology degree \\
\midrule
MPC (SPDZ) \cite{keller2020mp} & Secret Sharing & $\alpha_{MPC} \cdot F_{plain}$ & $O(N^2 D)$ (Broadcast) \\
(2020) & (Beaver Triples + MACs) & Heavy Offline Phase & Quadratic in $N$ \\
\midrule
\textbf{CPPDD (Ours)} & \textbf{Sequential Blinding} & $\mathbf{O(D)}$ \textbf{(Per Client)} & $\mathbf{O(D)}$ \textbf{(Per Client)} \\
 & (Affine + Element-wise) & \textbf{Total:} $\mathbf{O(N \cdot D)}$ & \textbf{Linear Relay Chain} \\
\bottomrule
\end{tabular}
\end{table*}

\begin{table*}[t]
\centering
\scriptsize
\caption{\textbf{Concrete Performance Metrics: FLOPs, Data, and Latency}}
\label{tab:concrete_metrics}
\renewcommand{\arraystretch}{1.3}
\setlength{\tabcolsep}{5pt}
\begin{tabular}{p{0.18\linewidth} p{0.25\linewidth} p{0.25\linewidth} p{0.2\linewidth}}
\toprule
\textbf{Protocol} & \textbf{Typical Logical FLOPs} & \textbf{Typical Bytes (float32)} & \textbf{Sync Rounds} \\
\mbox{[Reference]} & \textit{(Processing Load)} & \textit{(Network Load)} & \textit{(Latency)} \\
\midrule
CKKS \cite{cheon2017homomorphic} & $\alpha_{HE} \times N \times D$ & $\gamma \times 4D$ per ciphertext & $\sim 1$ (Eval) \\
 & ($\alpha_{HE} \in [10^3, 10^5]$) & ($\times N$ if replicated) & (+ Bootstrapping) \\
\midrule
Sec. Agg. \cite{bonawitz2017aggregation} & $c_{prg} N D$ & $\alpha \times 4ND$ & $3$--$4$ Rounds \\
 & ($c_{prg} \approx 1$) & ($\alpha \approx 1.7$ empirical) & (Mask, Agg, Reveal) \\
\midrule
Con. Priv. \cite{he2019consensus} & $\approx 2 T |E| D$ & $\approx 4 T |E| D$ & $T$ Iterations \\
 & (Edge-wise ops) & (Neighbor exchanges) & (Typically $20$--$100$) \\
\midrule
MPC (SPDZ) \cite{keller2020mp} & $\alpha_{MPC} N D$ & Tens--Hundreds MB & Many Micro-rounds \\
 & ($10^2 \times$ Plaintext) & (Shares + Preproc) & (Circuit Depth) \\
\midrule
\textbf{CPPDD (Ours)} & $\mathbf{\approx 4 \cdot N \cdot D}$ & $\mathbf{\approx 4 \cdot N \cdot D}$ \textbf{(Total)} & $\mathbf{N+2}$ \\
 & (2 Obf, 1 Enc, 1 Dec) & ($\approx 4D$ per link) & \textbf{(Sequential)} \\
\bottomrule
\end{tabular}
\end{table*}

\begin{table}[t]
\centering
\scriptsize
\caption{\textbf{Security, Trust, and Fault Model}}
\label{tab:security}
\renewcommand{\arraystretch}{1.3}
\setlength{\tabcolsep}{3pt}
\begin{tabularx}{\linewidth}{lXXXX}
\toprule
\textbf{Protocol} & \textbf{Security Paradigm} & \textbf{Adversary Model} & \textbf{Fault/Liveness} & \textbf{Verification} \\
\midrule
CKKS \cite{cheon2017homomorphic} & Semantic (RLWE) & Semi-honest & No Liveness & Public Params \\
Sec. Agg. \cite{bonawitz2017aggregation} & Info-Theoretic & Honest-but-Curious & \textbf{Dropout Robust} & Mask Consistency \\
Con. Priv. \cite{he2019consensus} & Diff. Privacy & Semi-honest & Topology Dep. & Convergence \\
MPC \cite{keller2020mp} & Malicious Secure & Malicious ($N-1$) & Safety (No Liveness) & MAC Check \\
\textbf{CPPDD (Ours)} & \textbf{Consensus Integrity} & \textbf{Malicious ($N-1$)} & \textbf{Atomic Abort} & \textbf{Dual Checksums} \\
\bottomrule
\end{tabularx}
\end{table}

\begin{table}[t]
\centering
\scriptsize
\caption{\textbf{Applicability and Payload Capabilities}}
\label{tab:applicability}
\renewcommand{\arraystretch}{1.3}
\setlength{\tabcolsep}{3pt}
\begin{tabularx}{\linewidth}{lXXXX}
\toprule
\textbf{Protocol} & \textbf{Data Types} & \textbf{Tensor Support} & \textbf{Deployment} & \textbf{Result Release} \\
\midrule
CKKS \cite{cheon2017homomorphic} & Approx. Numbers & SIMD Packing & Cloud Server & Server Decrypts \\
Sec. Agg. \cite{bonawitz2017aggregation} & Integers/Vectors & Flattening Req. & Mobile / FL & Server Reconstructs \\
Con. Priv. \cite{he2019consensus} & Scalars & Element-wise & Edge / IoT & Local Convergence \\
MPC \cite{keller2020mp} & Finite Fields & Circuit Encoding & Cluster & MAC Opening \\
\textbf{CPPDD (Ours)} & Floating-Point Vectors & Native Vector & Consortium & Unanimous Ledger \\
\bottomrule
\end{tabularx}
\end{table}

\begin{table}[t]
\centering
\scriptsize
\caption{\textbf{Fault Handling, Detection, and Recovery Mechanisms}}
\label{tab:fault_recovery}
\renewcommand{\arraystretch}{1.3}
\setlength{\tabcolsep}{3pt}
\begin{tabularx}{\linewidth}{lXXX}
\toprule
\textbf{Protocol} & \textbf{Fault Model} & \textbf{Detection Mechanism} & \textbf{Mitigation Strategy} \\
\midrule
CKKS \cite{cheon2017homomorphic} & Noise / Param Error & Decryption failure & Re-tuning parameters \\
Sec. Agg. \cite{bonawitz2017aggregation} & Client Dropouts & Server timeouts & Threshold Reconstruction \\
Con. Priv. \cite{he2019consensus} & Link Loss / Noise & Residual Norm & Re-run / Node Exclusion \\
MPC \cite{keller2020mp} & Malicious Behavior & MAC Verification & Protocol Abort \\
\textbf{CPPDD (Ours)} & \textbf{Malicious / Faulty} & \textbf{Step/Data Checksums} & \textbf{Atomic Abort + Restart} \\
\bottomrule
\end{tabularx}
\end{table}

\begin{table}[t]
\centering
\scriptsize
\caption{\textbf{Scalability, Efficiency, and Resource Footprint}}
\label{tab:scalability_notes}
\renewcommand{\arraystretch}{1.3}
\setlength{\tabcolsep}{3pt}
\begin{tabularx}{\linewidth}{lXXXX}
\toprule
\textbf{Protocol} & \textbf{Comp. Scaling} & \textbf{Comm. Scaling} & \textbf{Memory} & \textbf{Efficiency Profile} \\
\midrule
CKKS \cite{cheon2017homomorphic} & Linear $O(N)$ & $O(D)$ (High Const.) & High & High Latency \\
 & (Server Agg) & (Expansion $\gamma$) & (Keys) & (Batch Compute) \\
\midrule
Sec. Agg. \cite{bonawitz2017aggregation} & Linear $O(N)$ & $O(N + D)$ & Moderate & Excellent Scalability \\
 & (Client Setup) & (Per Client) & & (Bandwidth-Bound) \\
\midrule
Con. Priv. \cite{he2019consensus} & $O(T |E| D)$ & $O(T |E| D)$ & Low & Scales with \\
 & (Iterative) & & & sparse graphs \\
\midrule
MPC \cite{keller2020mp} & Quadratic $O(N^2)$ & $O(N^2 D)$ & High & Strong but Costly \\
 & (Typical) & (Broadcast) & (Triples) & \\
\midrule
\textbf{CPPDD (Ours)} & \textbf{Linear} $\mathbf{O(N)}$ & $\mathbf{O(D)}$ \textbf{(Relay)} & \textbf{Minimal} & \textbf{Edge-suitable} \\
 & (Fixed D) & (Per Client) & (Few vectors) & \textbf{Low Latency} \\
\bottomrule
\end{tabularx}
\end{table}

\begin{table}[t]
\centering
\scriptsize
\caption{\textbf{Application Suitability Matrix for Collaborative Aggregation}}
\label{tab:applications}
\renewcommand{\arraystretch}{1.3}
\setlength{\tabcolsep}{6pt}
\begin{tabular}{lccc}
\toprule
\textbf{Protocol} & \textbf{Multi-Inst.} & \textbf{Federated} & \textbf{Edge / IoT} \\
\mbox{[Reference]} & \textbf{Escrow / Voting} & \textbf{Model Training} & \textbf{Analytics} \\
\midrule
CKKS \cite{cheon2017homomorphic} & Suitable & Suitable & Not suitable \\
 & (Needs consent layer) & (High Overhead) & (Too Heavy) \\
\midrule
Sec. Agg. \cite{bonawitz2017aggregation} & Not suitable & \textbf{Highly suitable} & \textbf{Highly suitable} \\
 & (No Atomic Release) & (Mobile Standard) & (If Bandwidth allows) \\
\midrule
Con. Priv. \cite{he2019consensus} & Not suitable & Suitable & Suitable \\
 & (Probabilistic) & (Needs Tuning) & (Topology Dep.) \\
\midrule
MPC \cite{keller2020mp} & \textbf{Highly suitable} & Suitable & Not suitable \\
 & (Costly) & (Expensive) & (Comm. Heavy) \\
\midrule
\textbf{CPPDD (Ours)} & \textbf{Highly suitable} & \textbf{Suitable} & \textbf{Highly suitable} \\
 & \textbf{(Atomic Fairness)} & \textbf{(Consortiums)} & \textbf{(Lightweight)} \\
\bottomrule
\end{tabular}
\end{table}

\subsection{Scalability}
\begin{figure}[!t]
    \centering
    \includegraphics[width=\linewidth]{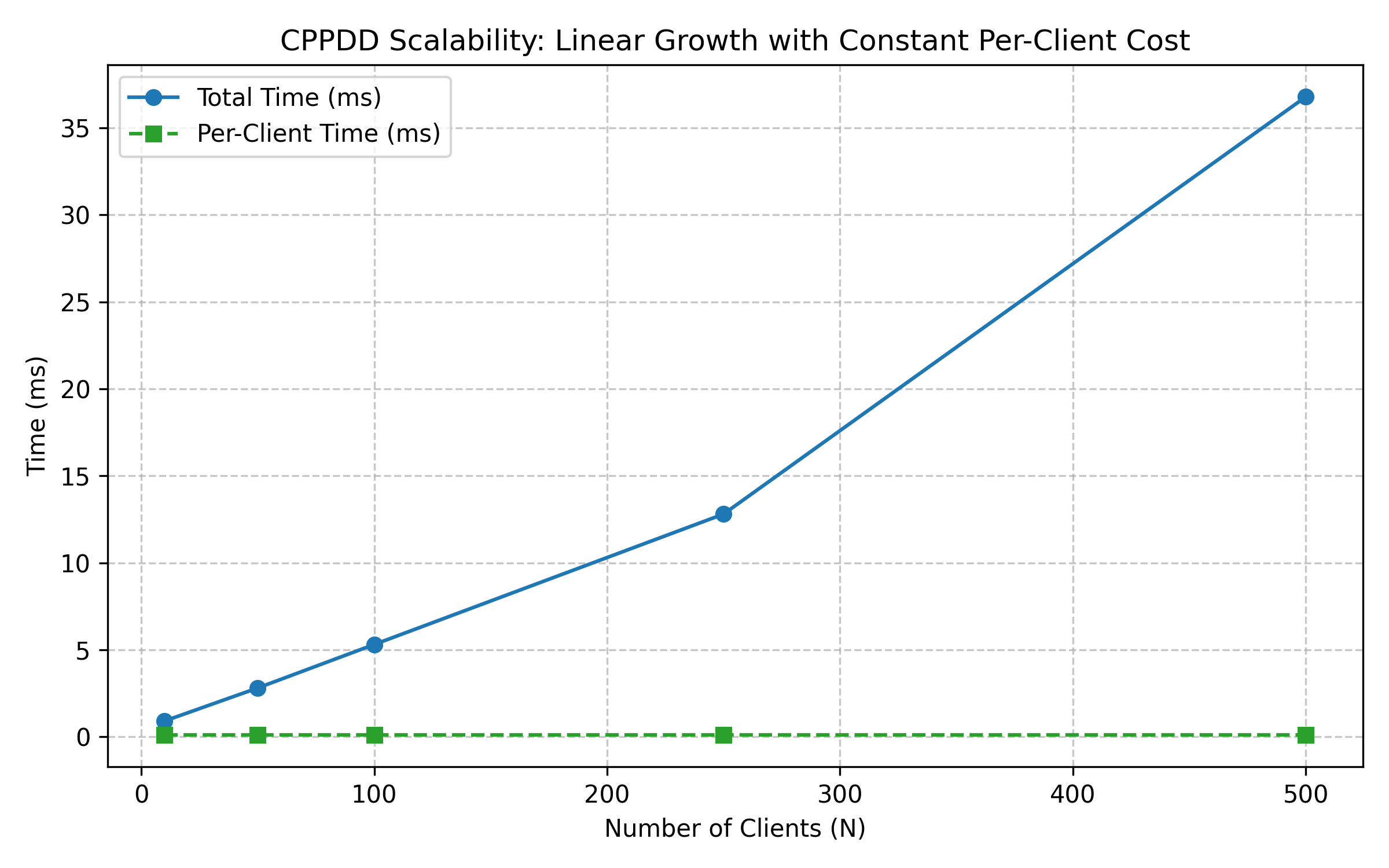}
    \caption{Scalability of CPPDD. Total execution time scales linearly with the number of clients $N$, while per-client computation remains constant and sub-millisecond.}
    \label{fig:scalability}
\end{figure}
Figure~\ref{fig:scalability} shows that total computation time grows linearly with $N$, reaching approximately 44\,ms for $N=500$. Per-client overhead is consistently below 0.3\,ms, confirming $O(N \cdot D)$ complexity and constant per-client cost.

The sequential protocol requires $O(N)$ communication rounds. In low-latency environments, end-to-end latency remains minimal. In high-latency settings, performance is network-bound, making CPPDD most suitable for high-bandwidth consortiums where atomic fairness and verifiable integrity are prioritized.

\subsection{Malicious Deviation Detection}
\begin{figure}[!t]
    \centering
    \includegraphics[width=0.8\linewidth]{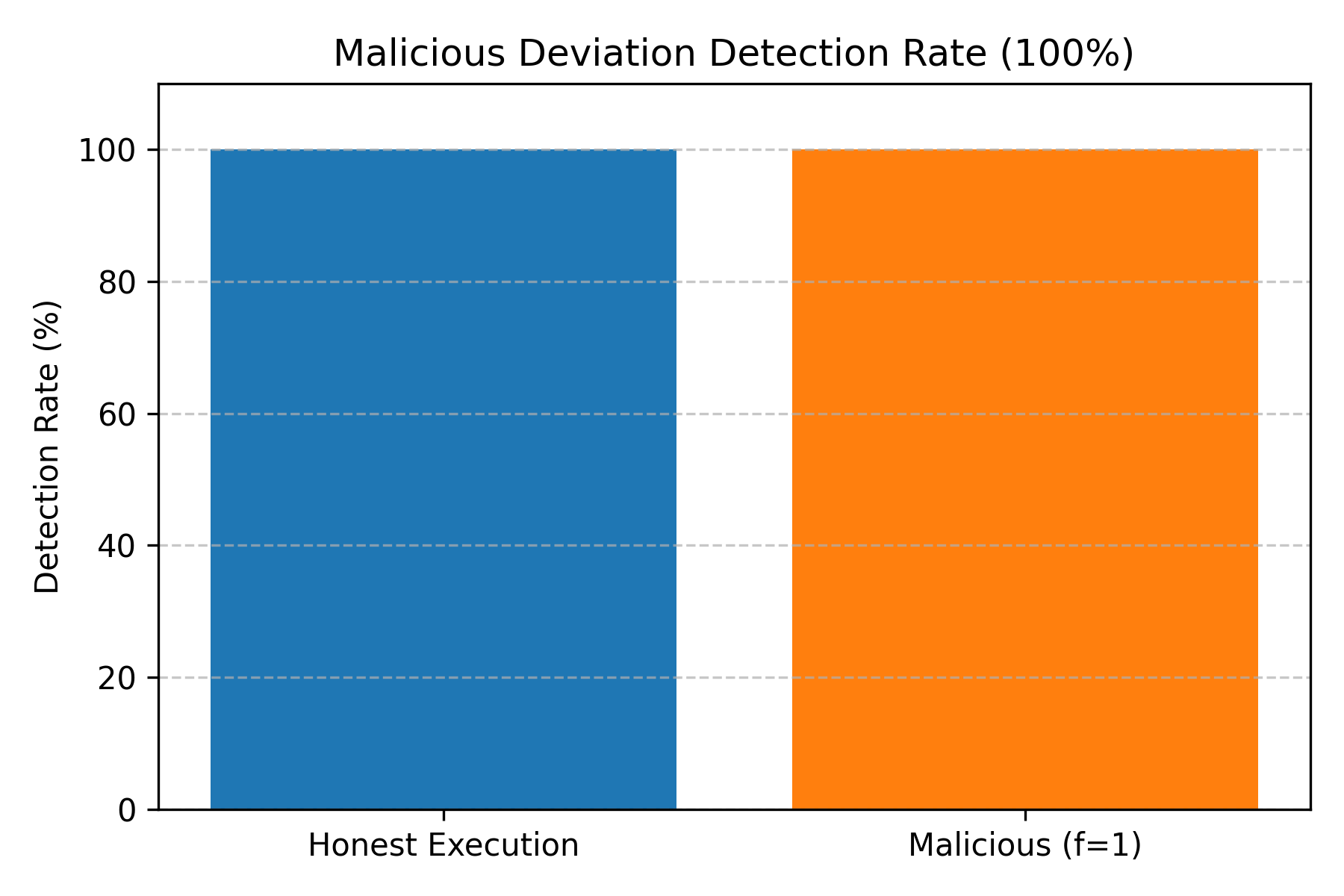}
    \caption{Malicious deviation detection rate (100\% for both honest and malicious executions).}
    \label{fig:detection}
\end{figure}
Figure~\ref{fig:detection} confirms a 100\% detection rate, consistent with the theoretical CDIF guarantees.

\subsection{Aggregate Correctness}
\begin{figure}[!t]
    \centering
    \includegraphics[width=\linewidth]{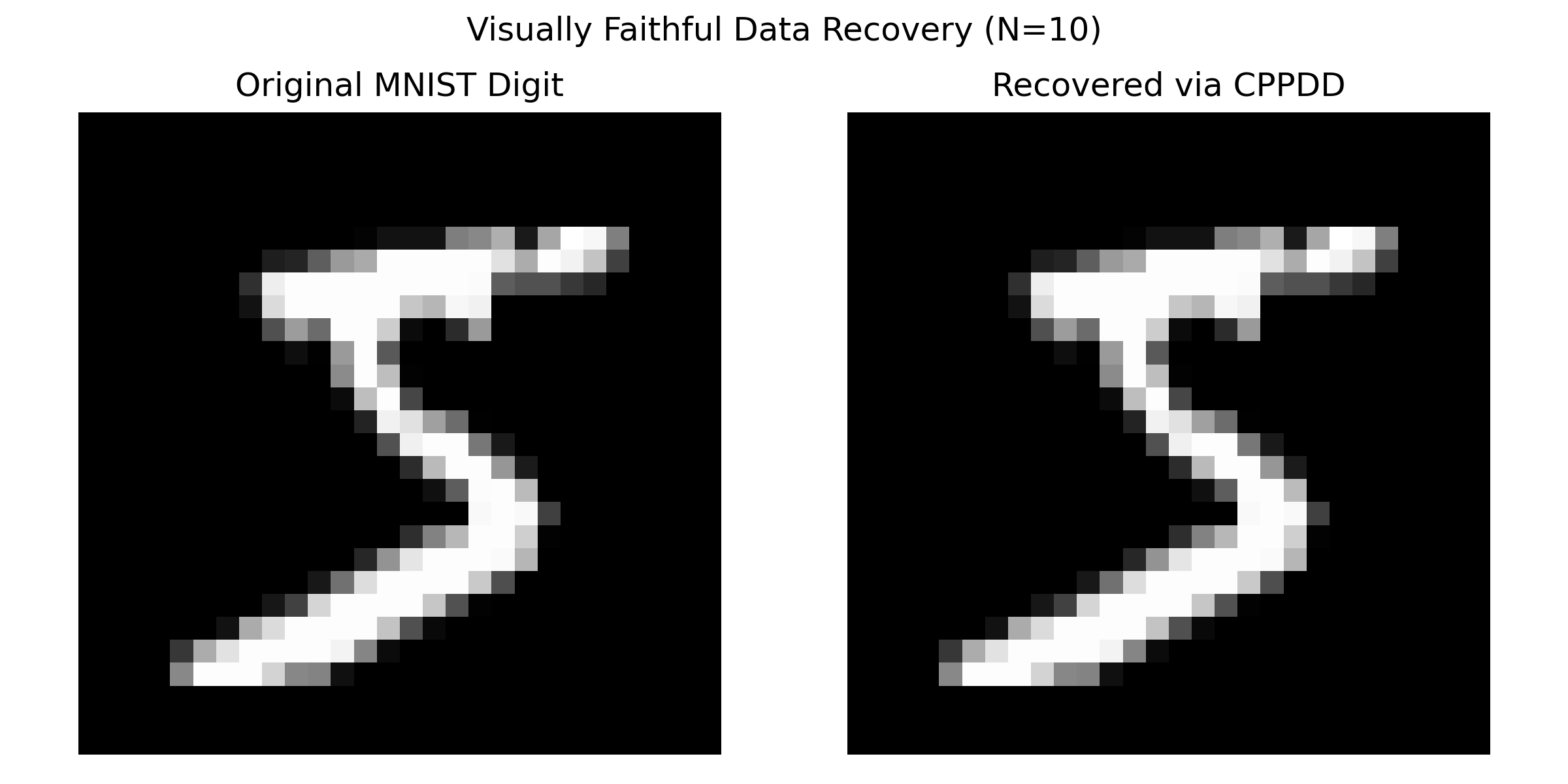}
    \caption{Visually faithful data recovery demonstration ($N=10$). Left: original MNIST digit; right: recovered digit via CPPDD.}
    \label{fig:recovery_visual}
\end{figure}
Figure~\ref{fig:recovery_visual} demonstrates visually faithful recovery of individual client contributions, with results identical within floating-point precision.

The evaluation confirms the framework's computational efficiency, robust malicious security, and high-fidelity recovery, positioning CPPDD as a practical lightweight solution for privacy-preserving multi-party aggregation.


\section{Applications and Future Work}
\label{sec:applications}

The CPPDD framework enables privacy-preserving, unanimous-release aggregation in distributed systems where participants require atomic disclosure—data remains confidential until \emph{all} parties contribute faithfully, with immediate abort on deviation. This section explores real-world applications leveraging CDIF integrity, linear complexity, bandwidth efficiency, and coordinator independence, followed by promising research directions.

\subsection{Applications}

\subsubsection{Secure Consortium Benchmarking}
In competitive industries (e.g., finance, healthcare), organizations often need to aggregate sensitive metrics—such as average employee salaries, total carbon emissions, or hospital bed capacity—without revealing individual proprietary data to competitors. CPPDD is ideally suited for this \textit{Consortium} setting. A regulator or trade association acts as the Trusted Coordinator for setup. The \textit{Unanimous Release} property ensures fairness: the final aggregate $L^{(N)}$ is only revealed if all consortium members contribute valid data. If a competitor attempts to withdraw or inject false data to skew the benchmark, the CDIF mechanism triggers an atomic abort, protecting the confidentiality of the remaining honest participants.

\subsubsection{Federated Learning with Atomic Gradient Release}
Federated learning (FL) aggregates model updates from edge devices while preserving training data privacy~\cite{bonawitz2017aggregation}. CPPDD extends this to \emph{consortium FL}, where gradients are released \emph{only if all institutions contribute} (e.g., hospitals in medical consortia). Each client obfuscates local gradients; CDIF detects dropout or poisoning (e.g., Byzantine attacks~\cite{blanchard2017machine}). This enforces fairness—no party gains model improvements without reciprocation—ideal for regulated sectors (healthcare, finance) under GDPR/CCPA.

\subsubsection{Geo-Information Capacity Building}
As demonstrated in capacity-building programs (e.g., Kathmandu University’s Geomatics Engineering~\cite{ghimire2023capacity}), CPPDD supports secure aggregation of survey data from remote field units. Students and researchers contribute geospatial vectors (e.g., elevation, land use) without exposing raw measurements until field campaign completion. Edge offloading (Section~\ref{sec:evaluation}) accommodates low-resource devices, fostering inclusive data science in developing regions.

\subsection{Future Work}

\subsubsection{Dynamic Priority and Threshold Consensus}
The current priority order is static, making the protocol susceptible to targeted denial-of-service or specific "last-mover" advantages. Extending to \emph{dynamic reordering} via verifiable delay functions (VDFs)~\cite{benedikt2019verifiable} would enable adaptive, unpredictable sequencing. Furthermore, generalizing to $(t,N)$-threshold unanimity—where data is released upon $t$ honest contributions—could be achieved by integrating Shamir’s secret sharing~\cite{shamir1979share}. This would preserve CDIF integrity for the participating subset while allowing the protocol to proceed despite a limited number of offline participants.

\subsubsection{Scalability and Latency Extensions}
The sequential chain architecture implies $O(N)$ communication rounds. While pipelined broadcast over high-bandwidth networks can mitigate individual step latency, future iterations could employ \textbf{tree-based aggregation} to reduce total rounds to $O(\log N)$. Integrating probabilistic agreement protocols (e.g., HotStuff~\cite{yin2019hotstuff}) would allow for parallelized validation of step-checksums, significantly improving end-to-end performance in high-latency wide-area networks (WAN).

\subsubsection{Verifiable Coordinator and Zero-Knowledge Setup}
CPPDD currently relies on an honest-but-curious coordinator during the setup phase. Integrating zk-SNARKs~\cite{ben2020scalable} for \emph{verifiable setup} would allow the coordinator to provide a non-interactive proof that $L_C$ and $\sigma_S$ were computed correctly according to the private keys. This would allow clients to verify the integrity of the encrypted structures without ever trusting the coordinator, transitioning the framework to a fully trustless deployment model.

\vspace{1em}
In summary, CPPDD establishes a foundational primitive for \emph{consensus-dependent privacy}, bridging lightweight cryptographic operations with maliciously-secure coordination. Its applications span governance, machine learning, and geo-information sciences, offering a scalable path toward verifiable, all-or-nothing multi-party computation in resource-constrained environments.


\section{Conclusion}
\label{sec:conclusion}

The CPPDD framework introduces a lightweight protocol for privacy-preserving distributed data aggregation, achieving unanimous-release confidentiality through post-setup autonomy and exact data recovery with $O(N \cdot D)$ complexity. By utilizing a dual-verification architecture of step and data checksums, CPPDD ensures all-or-nothing integrity via CDIF, where any malicious deviation by $f \ge 1$ participants is detected with overwhelming probability. Formal analysis establishes the framework's correctness and IND-CPA security against $N-1$ non-colluding corruptions. Empirical evaluations on MNIST and synthetic workloads demonstrate 100\% malicious deviation detection, sub-second per-client processing at $N=500$, and three-to-four orders of magnitude lower computational overhead compared to MPC and HE baselines. By enabling atomic collaboration in secure voting, consortium federated learning, and blockchain escrows, CPPDD addresses critical gaps in scalability and trust minimization—paving the way for practical, verifiable multi-party systems in regulated and resource-constrained environments.

\newpage
\bibliography{sn-bibliography}

\end{document}